\title{Succinct Data Structure for Chordal Graphs with Bounded Vertex Leafage} 
\author{Girish Balakrishnan}{Indian Institute of Technology Madras, Chennai, India}{girishb@cse.iitm.ac.in}{https://orcid.org/0000-0002-1825-0097}{}
\author{Sankardeep Chakraborty}{University of Tokyo, Tokyo, Japan}{sankardeep.chakraborty@gmail.com}{}{}
\author{N S Narayanaswamy}{Indian Institute of Technology Madras, Chennai, India}{swamy@cse.iitm.ac.in}{}{}
\author{Kunihiko Sadakane}{University of Tokyo, Tokyo, Japan}{sada@mist.i.u-tokyo.ac.jp}{}{}
\authorrunning{Balakrishnan, Chakraborty, Narayanaswamy, and Sadakane} 
\keywords{Boxicity, Interval Number, Succinct Data Structure} 
\begin{document}

\maketitle
\begin{abstract}
Chordal graphs is a well-studied large graph class that is also a strict super-class of path graphs. Munro and Wu (ISAAC 2018) have given an $(n^2/4+o(n^2))-$bit succinct representation for $n-$vertex unlabeled chordal graphs. A chordal graph $G=(V,E)$ is the intersection graph of sub-trees of a tree $T$.  Based on this characterization, the two parameters of chordal graphs which we consider in this work are \textit{leafage}, introduced by Lin, McKee and West (Discussiones Mathematicae Graph Theory 1998) and \textit{vertex leafage}, introduced by Chaplick and Stacho (Discret. Appl. Math. 2014). Leafage is the minimum number of leaves in any possible tree $T$ characterizing $G$. Let $L(u)$ denote the number of leaves of the sub-tree in $T$ corresponding to $u \in V$ and $k=\max\limits_{u \in V} L(u)$. The smallest $k$ for which there exists a tree $T$ for $G$ is called its vertex leafage.

In this work, we improve the worst-case information theoretic lower bound of Munro and Wu (ISAAC 2018) for $n-$vertex unlabeled chordal graphs when vertex leafage is bounded and leafage is unbounded.  The class of unlabeled $k-$vertex leafage chordal graphs that consists of all chordal graphs with vertex leafage at most $k$ and unbounded leafage, denoted $\mathcal{G}_k$, is introduced for the first time.  For $k>0$ in  $o(n/\log n)$, we obtain a lower bound of $((k-1)n \log n -kn \log k - O(\log n))-$bits on the size of any data structure that encodes a graph in $\mathcal{G}_k$. Further, for every $k-$vertex leafage chordal graph $G$ such that for $k>1$ in  $o(n^c), c >0$, we present a $((k-1)n  \log n + o(kn \log n))-$bit succinct data structure, constructed using the succinct data structure for path graphs with $kn/2$ vertices. Our data structure supports adjacency query in $O(k \log n)$ time and using additional $2n \log n$ bits, an $O(k^2 d_v \log n + \log^2 n)$ time neighbourhood query where $d_v$ is degree of $v \in V$. 
\end{abstract}

\section{Introduction}
A data structure for a graph class $\mathcal{G}$ of graphs with $n-$vertices is succinct if it takes $(\log |\mathcal{G}|+o(\log|\mathcal{G}|))-$bits of space; here $|\mathcal{G}|$ denotes the number of graphs in $\mathcal{G}$. A succinct representation for graph $G=(V,E)$ in $\mathcal{G}$ is expected to support the following queries for each pair of vertices $u,v \in V$:
\begin{itemize}
\itemsep0em
\item $\texttt{adjacency}(u, v)$: returns "YES" if and only if vertices $u$ and $v$ are adjacent in $G$.
\item $\texttt{neighborhood}(u)$: returns all the vertices in $V$ that are adjacent to vertex $u$.
\item $\texttt{degree}(u)$: returns the number of vertices adjacent to vertex $u$.
\end{itemize}

\noindent
The earliest work on space-efficient data structures for graph classes was by Itai and Rodeh~\cite{IR} for vertex labeled planar graphs and by Jacobson~\cite{Jacobson1989} for class of static unlabeled trees and planar graphs. In recent years, succinct data structures for intersection graphs is getting lot of attention. A few of them are, Acan et al.~\cite{HSSS} for interval graphs, Acan et al.~\cite{HSSKKS2020} for families of intersection graphs on circle, Munro and Wu~\cite{Munro_Wu} for chordal graphs and Balakrishnan et al.~\cite{BCNS2024} for path graphs. Also, Markenzon et al.~\cite{MARKENZON2013331} gives an efficient representation for chordal graphs.

A recent paper by Chakraborty and Jo~\cite{SS} improve the information theoretic lower bound for interval graphs as given by Acan et al.~\cite{HSSS} by bounding maximum degree. For bounded maximum degree $\Delta$ in $O(n^{\epsilon})$ where $1 < \epsilon < 1$, they give a lower bound of $(\frac{1}{6}n\log \Delta - O(n))-$bits and a $(n \log \Delta + O(n))-$bit space-efficient data structure. They also give a $((\chi -1)n + o(\chi n))-$bit space-efficient data structure for interval graphs with bounded chromatic number $\chi$ for $\chi=o(\log n)$. In Balakrishnan et al.~\cite{GSNK2023Arxiv}, such a parameterization has been applied to a larger graph class, namely, the class of graphs  with $d-$dimensional $t-$representation where parameters $d$ and $t$ are bounded. A $((2td-1)n\log n + o(tdn\log n))-$bit succinct data structure for graphs with $d-$dimensional $t-$representation is presented in~\cite{GSNK2023Arxiv}. Special cases of this graph class are graphs with bounded boxicity with $t=1$ and bounded interval number with $d=1$. Chordal graphs is also a large graph class that is a generalization of interval graphs. A graph is a chordal graph if it contains only cycles of length at most three. Lin et al.~\cite{LMW} introduced the parameter leafage and Chaplick and Stacho~\cite{CS} the parameter vertex leafage for chordal graphs. In this work, we define the $k-$vertex leafage chordal graphs that contains chordal graphs with bounded vertex leafage and unbounded leafage. When vertex leafage and leafage are equal to two we get interval graphs and when vertex leafage is two and leafage is unbounded we get path graphs. We present a data structure for $k-$vertex leafage chordal graphs using succinct data structure for path graphs and prove that it is succinct.

\noindent
{\it Convention.} Throughout the rest of this paper, set of vertices and edges of a graph $G$ will be denoted by $V(G)$ and $E(G)$, respectively.

\noindent
{\bf Our Results.} We present the following two theorems. The first theorem proves a lower bound on the class of $k-$vertex leafage graphs, denoted $\mathcal{G}_k$.

\begin{restatable}[]{theorem}{kleafagelb}
\label{thm:kleafagelb}
For $k >0$ in $o(n^c), c > 0, \log |\mathcal{G}_k| \ge (k-1)n \log n -kn \log k - O(\log n)$.    
\end{restatable}

\noindent
The next theorem proves the existence of a matching  data structure that supports adjacency query efficiently. For neighbour query we use additional $2n \log n$ bits.

\begin{restatable}[]{theorem}{succinctds}
\label{thm:succinctds}
For $k>1$ and in $o(n^c), c >0$, a graph $G \in \mathcal{G}_k$ has a $(k-1)n \log n + o(kn \log n)$-bit succinct data structure that supports adjacency query in $O(k\log n)$ time and using additional $2n\log n$ bits the neighbourhood query for vertex $v$ in $O(k^2 d_v \log  n + \log^2 n)$ time where $d_v$ is the degree of $v \in V(G)$.    
\end{restatable}

\noindent
{\bf Our Main Ideas.} The main ideas behind this work are two fold.
\begin{enumerate}
    \itemsep0em
    \item Demonstrate an improved worst-case information theoretic lower bound for chordal graphs with bounded vertex leafage by using a simple and constructive counting technique motivated by the method of partial coloring as used by Acan et al.~\cite{HSSKKS2020}.
    \item By carefully decomposing the sub-trees of a tree into paths, present a data structure for chordal graphs with bounded vertex leafage that uses the succinct data structure for path graphs given in Balakrishnan et al.~\cite{BCNS2024} as a black box. Also, using the above improved lower bound show that this data structure is succinct.
\end{enumerate}

\noindent
{\bf Organisation of the Paper.} Section~\ref{sec:prelims} gives details of the compact data structures we have used in the construction of the succinct data structure of $k-$vertex leafage chordal graphs along with characterisation of chordal and path graphs. It also formalizes and explains the method of partial coloring, first used in Acan et al.~\cite{HSSKKS2020} to obtain the lower bound for $\mathcal{G}_k$. Section~\ref{sec:k-leafage} defines the $k-$vertex leafage chordal graphs and also gives a lower bound on the cardinality of the class. The succinct data structure design is given in Section~\ref{sec:succrep} and it contains in Section~\ref{sec:transformation}, the method to convert a $k-$vertex leafage chordal graph with $n$ vertices to a path graph with $kn/2$ vertices. 
Section~\ref{sec:conclusion} concludes by giving motivation to extend this work by generalizing the parameter vertex leafage and leafage to general graphs.

\section{Preliminaries}
\label{sec:prelims}

From Gavril~\cite{Gavril} we have, the following characterisation of chordal graphs; see Golumbic~\cite{agtpg} for more details on chordal graphs or otherwise called triangulated graphs.
\begin{theorem}
\label{thm:chordal}    
The graph $G$ is a chordal graph if and only if there exists a clique tree $T$, such that for every $v \in V(G)$, the set of maximal cliques containing $v$ form a sub-tree of $T$ such that two vertices are adjacent if the corresponding sub-trees intersect.
\end{theorem}

\noindent
Also, from Gavril~\cite{Gavril_path} we have the following characterisation of path graphs; see Monma and Wu~\cite{MonmaW86} for more details on path graphs.

\begin{theorem}
\label{thm:pg}
The graph $G$ is a path graph if and only if  there exists a clique tree $T$, such that for every $v \in V(G)$, the set of maximal cliques containing $v$ is a path in $T$ such that two vertices are adjacent if the corresponding paths intersect.
\end{theorem}

\noindent
{\bf Succinct Data Structure for Ordinal Trees.} Tree $T$ is called an \textit{ordinal tree} if for $z>0$ and any $u \in V(T)$ with children $\{u_1,\ldots,u_z\}$, for $1 \le i < j \le z$, $u_i$ is to the left of  $u_j$~\cite{Nav_Sada}.  By considering ordinal trees as balanced parenthesis Navarro and Sadakane~\cite{Nav_Sada} has given a $2n+o(n)$ bit succinct data structure. 
\begin{lemma}
\label{lem:2ntree}
For any ordinal tree $T$ with $n$ nodes, there exists a $2n+o(n)$ bit Balanced Parentheses (BP) based data structure that supports the following functions among others in constant time :
\begin{enumerate}
    \itemsep0em 
    \item ${\normalfont\texttt{lca}}(i,j)$, returns the lowest common ancestor of two nodes $i,j$ in $T$,  and
    \item ${\normalfont\texttt{child}}(i,v)$, returns the $q-$th  child of $v$ in $T$.
\end{enumerate}
\end{lemma}

\noindent
{\bf Rank-Select Data Structure.} Bit-vectors are extensively used in the succinct representation given in Section~\ref{sec:succrep}. The following data structure due to Golynski et al. \cite{GMS2006} and the functions supported by  it are useful.

\begin{lemma}
\label{lem:bstr}
Let $B$ be an $n-$bit vector and  $b \in \{0,1\}$.  There exists an $n+o(n)$ bit data structure that supports the following functions in constant time:
\begin{enumerate}
    \itemsep0em 
    \item ${\normalfont\texttt{rank}}(B,b,i)$:  Returns the number of $b$'s up to and including position $i$ in the bit vector $B$ from the left. 
    \item ${\normalfont\texttt{select}}(B,b,i)$: Returns the position of the $i$-th $b$ in the bit vector $B$ from left. For $i \notin [n]$ it returns 0.
\end{enumerate}
\end{lemma}

\noindent
{\bf Non-Decreasing Integer Sequence Data Structure.} Given a set of positive integers in the non-decreasing order we can store them efficiently using the differential encoding scheme for increasing numbers; see Section 2.8 of~\cite{Navarro}. Let $S$ be the data structure that supports differential encoding for increasing numbers then the function $\texttt{accessNS}(S,i)$ returns the $i-$th number in the sequence. 

\begin{lemma}
\label{lem:ssn}
Let $S$ be a sequence of $n$ non-decreasing positive integers $a_1,\ldots,a_n, 1 \leq a_i \leq n$. There exists a $2n+o(n)$ bit data structure that supports ${\normalfont\texttt{accessNS}}(S,i)$ in constant time.
\end{lemma}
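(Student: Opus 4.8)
The plan is to replace the sequence by a gap (differential) encoding stored in a single bitvector, and to answer queries via a constant‑time select operation. First I would set $d_1 = a_1$ and $d_i = a_i - a_{i-1}$ for $2 \le i \le n$. Since $S$ is non‑decreasing, each $d_i \ge 0$, and since $a_n \le n$ we get $\sum_{i=1}^{n} d_i = a_n \le n$. I would then form the bitvector
$B = 0^{d_1}\,1\,0^{d_2}\,1\,\cdots\,0^{d_n}\,1$,
which contains exactly $n$ ones and exactly $a_n \le n$ zeros, so its length is $n + a_n \le 2n$ bits.

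Next I would observe that the position of the $i$‑th $1$ in $B$ is $\bigl(\sum_{j=1}^{i} d_j\bigr) + i = a_i + i$, hence $a_i = \mathrm{select}_1(B, i) - i$; this is exactly the value $\texttt{accessNS}(S,i)$ should return, and it is computed in $O(1)$ time provided $\mathrm{select}_1$ on $B$ is answered in constant time. To get that, I would augment the $2n$‑bit string $B$ with a standard succinct constant‑time select‑$1$ index whose redundancy is $o(|B|) = o(n)$ bits (the structures referenced in Section~2.8 of~\cite{Navarro} / Section~2.3 of~\cite{GSNS} suffice). Together with $B$ itself this yields a data structure of $2n + o(n)$ bits, and the (integer) values $n$ and $a_n$ can be stored separately in $O(\log n)$ bits without affecting the bound. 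This completes the construction and the query.

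The only points that need care, neither being a genuine obstacle, are: (i) the hypothesis $a_i \le n$ is what guarantees the zero‑count is at most $n$ and hence $|B| \le 2n$ — without such a bound on the magnitudes one would need a different encoding; and (ii) one must pick a rank/select variant whose extra space is truly $o(n)$ while supporting $\mathrm{select}$ in worst‑case constant time, which is classical. If one prefers to avoid select, an equivalent route is to keep $B$ with a constant‑time rank structure plus a two‑level position sampling of the ones, but the select‑based presentation above is the cleanest.
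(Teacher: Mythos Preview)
Your proposal is correct and is precisely the differential (gap) encoding the paper has in mind: the paper does not give its own proof but simply invokes ``the differential encoding scheme for increasing numbers'' with references to \cite{Navarro} and \cite{GSNS}, which is exactly the unary-gap bitvector with a constant-time $\mathrm{select}_1$ structure that you spelled out. Your space accounting ($n$ ones, $a_n\le n$ zeros, plus an $o(n)$-bit select index) and the identity $a_i=\mathrm{select}_1(B,i)-i$ are both right.
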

\begin{proof}
We will prove the lemma by giving a construction of such a data structure. $a_1$ will be represented by a sequence of $a_1$ 1's followed by a 0. Subsequently, $a_i$'s are represented by storing $a_i-a_{i-1}$ many 1's followed by a 0. It will take at most $2n$ bits since there are $n$ 0's and at most $n$ 1's. Let this bit string be stored using the data structure of Lemma~\ref{lem:bstr} and be denoted as $B$. $B$ takes $2n+o(n)$ bits. $\texttt{accessNS}(S,i)$ can be implemented using $\texttt{rank}(B,1,\texttt{select}(B,0,i))$.
\end{proof}

\noindent
{\bf Succinct Data Structure for Path Graphs.} From~\cite{BCNS2024} we have the following succinct data structure for path graphs that supports adjacency and neighbourhood queries with slight modifications in input. In the following lemma the endpoints of paths  are input to the queries whereas in~\cite{BCNS2024} the path  indices are given as input.
\begin{lemma}
\label{lem:succpg}
A path graph $G$ has an $n \log n + o(n\log n)$-bit succinct representation. For a $u \in V(G)$, let $P_u=(s_u,t_u)$ be the path corresponding $u$ in clique tree $T$ of $G$. The succinct representation constructed from the clique tree $T$ representation supports for $u \in V(G)$ the following  queries:
\begin{enumerate}
    \itemsep0em 
    \item ${\normalfont \texttt{adjacencyPG}(s_u, t_u, s_v, t_v)}$: Returns true if $P_u=(s_u,t_u)$ intersects $P_v=(s_v,t_v)$ in $O(\log n)$ time else false,
    \item ${\normalfont \texttt{pathep}(u)}$:  Returns the endpoints of path $P_u$, corresponding to $u$, in $T$ in $O(\log  n)$ time, and
    \item ${\normalfont \texttt{neighbourhoodPG}(s_u, t_u)}$: Returns the list of paths intersecting $P_u=(s_u,t_u)$ in $O(d_u \log n)$ time where $d_u$ is the degree of vertex $u$.
    \item ${\normalfont \texttt{getHPStartNode}(v)}$: Returns the start node of heavy path $\pi$ that contains $v \in V(T)$ in constant time. If $v$ is not the first child, that is, it is adjacent to its parent by a light edge, then $v$ itself is returned.
\end{enumerate}
\end{lemma}

\noindent{\bf Permutations.} The following  data structure by  Munro et al.~\cite{DBLP:journals/tcs/MunroRRR12}, gives a  succinct representation for storing permutation of $[n]$.
\begin{lemma}[\cite{DBLP:journals/tcs/MunroRRR12}]\label{lem:perm}
Given a permutation of $[n]$ there exists an $(n\log n+ o(n\log n))$-bit data structure that supports the following queries.
\begin{itemize}
    \itemsep0em
    \item $\pi(i)$: Returns the $i-$th value in the permutation in $O(1)$ time.
    \item $\pi^{-1}(j)$: Returns the position of the $j-$th value in the permutation in $O(f(n))$ time for any increasing function $f(n)=o(\log n)$.
\end{itemize}
\end{lemma}

\noindent
{\bf Method of Partial Coloring.} Let $\mathcal{G}$ be a graph class. A \textit{partial coloring} of $G \in \mathcal{G}$ is the triple $\langle G, U, g \rangle$ where,
\begin{itemize}
    \itemsep0em
    \item $U \subseteq V(G)$ such that for $s>0, |U|=s$, and
    \item $g:U \rightarrow\{1,\ldots,s\}$ is a bijection.
\end{itemize}

\noindent
Every vertex $u \in U$ is said to have color $g(u)$ and vertices in $V(G)\backslash U$ are said to be uncolored. Two partially colored graphs $\langle H_1,U_1,g_1 \rangle$ and $\langle H_2,U_2,g_2 \rangle$ are said to be different when either:
\begin{enumerate}
    \itemsep0em
    \item $E(H_1) \ne E(H_2)$, or
    \item $E(H_1)=E(H_2)=E(H)$ for some $H \in \mathcal{G}$ and 
    \begin{enumerate}
        \itemsep0em
         \item $U_1 \ne U_2$ and there does not exist a bijection $f:V(H_1) \rightarrow V(H_2)$ such that for every $u \in V(H_1)\backslash U_1$ there exists a $f(u) \in V(H_2)\backslash U_2$ with same set of colored neighbours, or
         \item for $U_1=U_2=U$ there exists $u \in V(H)\backslash U$ such that its colored neighbourhood in $\langle H_1,U_1,g_1 \rangle$ and $\langle H_2,U_2,g_2 \rangle$ are different.
    \end{enumerate}
\end{enumerate}
Else, they are same. The method of counting by partial coloring as given in Theorem 1 of Acan et al.~\cite{HSSKKS2020} can be defined using the following proposition.
\begin{proposition}
\label{prop:partialcoloring}
    Let $\mathcal{G}'$ be the class of partially colored graphs obtained from class of graphs $\mathcal{G}$ by selecting $m$ vertices out of $n$ and coloring them using $m$ distinct colors. Then, $|\mathcal{G}'| \le {n \choose m} m! |\mathcal{G}|$. If there exists a graph class $\mathcal{G}^c \subset \mathcal{G}'$ then $|\mathcal{G}'| \ge |\mathcal{G}^c|$ and $|\mathcal{G}| \ge \frac{|\mathcal{G}^c|}{{n \choose m} m!}$.
\end{proposition}

\noindent
{\it Remark:} While computing $|\mathcal{G}'|$, indistinguishable partially colored graphs can also be counted since we only require an upper bound, however, this is not the case while computing $|\mathcal{G}^c|$. 

\section{Class of $k-$Vertex Leafage Chordal Graphs and its Lower Bound}
\label{sec:k-leafage}
In this section, we define the class of $k-$vertex leafage chordal graphs, denoted $\mathcal{G}_k$, followed by the lower bound for $\log|\mathcal{G}_k|$. According to Theorem~\ref{thm:chordal}, a graph $G$ is a chordal graph if there exists a tree $T$ such that corresponding to every vertex $v \in V(G)$ there exists a sub-tree $T_v$ of $T$ and $\{u,v\} \in E(G)$ if and only if $V(T_u)  \cap V(T_v) \ne \phi$ where $T_u$ is the sub-tree corresponding to $u$. We call $T$ the tree model of $G$. For every chordal graph there exists a tree $T$ called the \textit{clique tree} such that $V(T)$ is the set of maximal cliques of $G$. The \textit{leafage} of a chordal graph $G$, denoted $l(G)$, is the minimum number of leaves of a tree $T$ out of all possible trees characterising $G$. Leafage was studied by Lin et al. in~\cite{LMW}. Later, Chaplick and Stacho in~\cite{CS} studied \textit{vertex leafage}, denoted $vl(G)$. Let $L(u)$ denote the number of leaves of the sub-tree in $T$ corresponding to $u \in V$ and $k=\max\limits_{u \in V} L(u)$. The smallest $k$ for which there exists a tree $T$ for $G$ is called its vertex leafage.

\noindent
{\bf Class of $k-$Vertex Leafage Chordal Graphs.} The class of chordal graphs can be considered as the generalisation of path graphs using  vertex leafage as the parameter. Theorem~\ref{thm:pg} implies that path graphs are chordal graphs with vertex leafage equal to two and unbounded leafage. Generalizing this, $\mathcal{G}_k$ is the set of all chordal graphs with vertex leafage at most $k$ and unbounded leafage. Succinct data structure for path graphs is given by Balakrishnan et al. in~\cite{BCNS2024}. In this paper, we present a succinct data structure for chordal graphs with vertex leafage at most $k$ for $k \in o(n^c), c >0$ using succinct data structure for path graphs as black-box.

\noindent
{\bf Lower Bound.} Counting chordal graphs involves heavy  mathematical machinery as can be seen from Wormald~\cite{Wormald1985} on counting labeled chordal graphs which is used by Munro and Wu~\cite{Munro_Wu} to obtain lower bound for unlabeled chordal graphs. Here we give a much simpler technique for counting unlabeled chordal graphs with bounded vertex leafage. In order to derive the lower bound for $\log |\mathcal{G}_k|$ by implementing Proposition~\ref{prop:partialcoloring}, we define two new graph classes, $\mathcal{G}'_k$ and $\mathcal{G}^c_k$, where $\mathcal{G}'_k$ corresponds to $\mathcal{G}'$ and $\mathcal{G}^c_k$ to $\mathcal{G}^c$ of Proposition~\ref{prop:partialcoloring}.

\noindent
{\bf Graph Class $\mathcal{G}'_k$.} We consider the class of all partially colored $k-$vertex leafage chordal graphs, denoted $\mathcal{G}'_k$, that has for fixed $1 \le m \le n$, $m$ out of $n$ vertices colored using colors $\{1,\ldots,m\}$. Graphs in $\mathcal{G}'_k$ are obtained from graphs in $\mathcal{G}_k$ as follows. The input to the  procedure is $G \in \mathcal{G}_k$ and a set $\{v_1,\ldots,v_m\}$ of $m$ vertices of $G$. For each $G \in \mathcal{G}_k$, we get a set of ${n \choose m}m!$ graphs of $\mathcal{G}'_k$ where each graph $G'$ is obtained by coloring the selected $m$  vertices of $G$ by a permutation of $\{1,\ldots,m\}$. A partially colored graph in $\mathcal{G}'_k$ is denoted $\langle H, U, g \rangle$ where $U \subset V(H), |U|=m$, and $g: U \rightarrow \{1,\ldots,m\}$.
\begin{lemma}
\label{lem:colorGprime1}
For each $k \geq 1$, $\log |\mathcal{G}'_k| \le \log |\mathcal{G}_k| + n \log n - (n-m)\log(n-m) - 2n + O(\log n)$.
\end{lemma}
\begin{proof}
The $m$ vertices given as input to the procedure can be selected in $n \choose m$ ways and there are $m!$ ways of coloring it. Thus, from each $G$ in $\mathcal{G}_k$ we get ${n \choose m} m!$ partially colored graphs. Hence, we have $|\mathcal{G}'_k| \le |\mathcal{G}_k| \frac{n!}{(n-m)!}$. Taking log on both sides and using Stirling's approximation, that is, $\log n!=n \log n - n + O(\log n)$, we get $\log |\mathcal{G}'_k| \le \log |\mathcal{G}_k| + n \log n - (n-m)\log(n-m) - 2n + O(\log n)$. 
\end{proof}

\noindent
{\bf Graph Class $\mathcal{G}^c_k$.} As per Proposition~\ref{prop:partialcoloring}, we construct the class $\mathcal{G}^c_k \subset \mathcal{G}'_k$ for which we can obtain exact count or a lower bound.  We give a construction mechanism for graphs in $\mathcal{G}^c_k$ such that all graphs in it have the following properties. For $G \in \mathcal{G}^c_k$,
\begin{itemize}
    \itemsep0em
    \item $U \subseteq V(G)$, with $|U|=m$, is fixed and have a fixed coloring using colors $\{1,\ldots,m\}$,
    \item let $T$ be the tree corresponding to $G$ then the sub-trees in the clique tree corresponding to  vertices in $U$ consist of only one node, and
    \item sub-trees of $T$ corresponding to vertices in $V(G) \backslash U$ have at most $k$ leaves with at least $\frac{m+1}{2(k-1)}$ sub-trees with $k$ leaves.
\end{itemize}

\noindent
In other words, for all the partially colored graphs in $\mathcal{G}^c_k$, $U$ and $g$ are fixed. Based on the tree model, the vertices of a graph $H \in \mathcal{G}^c_k$ with tree model $T$ are of two types:
\begin{enumerate}
    \itemsep0em
    \item {\bf basis vertices $U$}: $m$ vertices of $U$ are represented by single-node sub-tree in $T$, and
    \item {\bf dependent vertices $V(H) \backslash U$}: rest of the $(n-m)$ vertices are represented by sub-trees in $T$ with number of leaves at most $k$ with at least at least $\frac{m+1}{2(k-1)}$ sub-trees with $k$ leaves. Let the dependent vertices corresponding to these  $\frac{m+1}{2(k-1)}$ sub-trees be denoted $U'$. 
\end{enumerate}

\noindent
We have the following useful proposition:
\begin{proposition}
    \label{prop:uniquetree}
    For a rooted tree $T$, the set $V' \subseteq V(T)$ uniquely defines a sub-tree $T'$ of $T$ such that if $u,v \in V'$ then the path connecting it is in $T'$.
\end{proposition}

\noindent
The sub-trees corresponding to the basis and dependent vertices are called basis and dependent sub-trees, respectively. Let $F=\frac{m+1}{2(k-1)}$. The input to the procedure that constructs $H$ are:
\begin{enumerate}
    \itemsep0em
    \item $n, m, k,$ and
    \item $\mathcal{J}=\{J_1,\ldots,J_{n-m-F}\}$ where for $1 \le i \le n-m-F, 1 \le t \le k$ , $J_i=\{a^i_1,\ldots,a^i_k\}$ and $1 \le a^i_t \le m$.
\end{enumerate}
The construction mechanism that constructs the partially colored graph $\langle H, U, g \rangle$ is as follows. 
\begin{enumerate}
    \itemsep0em
    \item Consider a rooted complete binary tree $T$ with $m>0$ nodes and let them be colored from 1 to $m$.
    \item {\bf Construction of basis sub-trees.} For $1 \le j \le m$, let each node $b_j \in V(T)$ be a single node sub-tree, $T_j$. These sub-trees are the \textit{basis sub-trees}. The basis sub-trees correspond to the $m$ basis vertices of $H$, denoted by $U$. Let the basis vertices be colored by the color assigned to the nodes to which they are assigned.
    \item {\bf Construction of dependent sub-trees.} First we define the fixed sub-trees with $k$ leaves. Since $T$ is a complete binary tree it has $\frac{m+1}{2}$ leaves. Let the set of leaves of $T$ be denoted by $L$. For $1 \le t \le F$, partition $L$ into blocks $L=\bigcup\limits_t L_t$ such that $|L_t|=k-1$ and for $1 \le z < z' \le t$, there does not exist a node in $L_z$ with color greater the smallest color of nodes in $L_{z'}$. For every $u_t \in U'$, construct a sub-tree $T_{u_t}$ in $T$ by connecting paths from the $k-1$ leaves in $L_t$ to the root of $T$. This ensures that $T_{u_t}$ has $k$ leaves. So, $|U'|=F$. For $1 \le i \le n-m-F$, construct sub-tree $T_i$ from $J_i=\{a^i_1,\ldots,a^i_k\}$, where for $1 \le t \le k, a^i_t \in V(T)$, such that $J_i$ is the set of $k$ nodes of $T_i$; as per Proposition~\ref{prop:uniquetree}, $J_i$ uniquely defines a sub-tree. These sub-trees are called \textit{dependent sub-trees} and correspond to the dependent vertices in $V(H) \backslash U$. The sub-trees corresponding to nodes of $U'$ ensure that the chordal graph is connected and there are at least $F$ sub-trees with $k$ leaves there by making  $H$ a $k-$vertex leafage chordal graph. Also, apart from $U$ and $g$, $U'$ is also fixed for any partially colored $k-$vertex leafage chordal graph in $\mathcal{G}^c_k$.
\end{enumerate}

\noindent
{\it Convention.} A node $a_j \in V(T)$ will also be used to denote the sub-tree $T_j$ corresponding to the basis vertex of $H$.

\begin{figure}[ht]
\centering
\includegraphics[width=12cm,height=6cm]{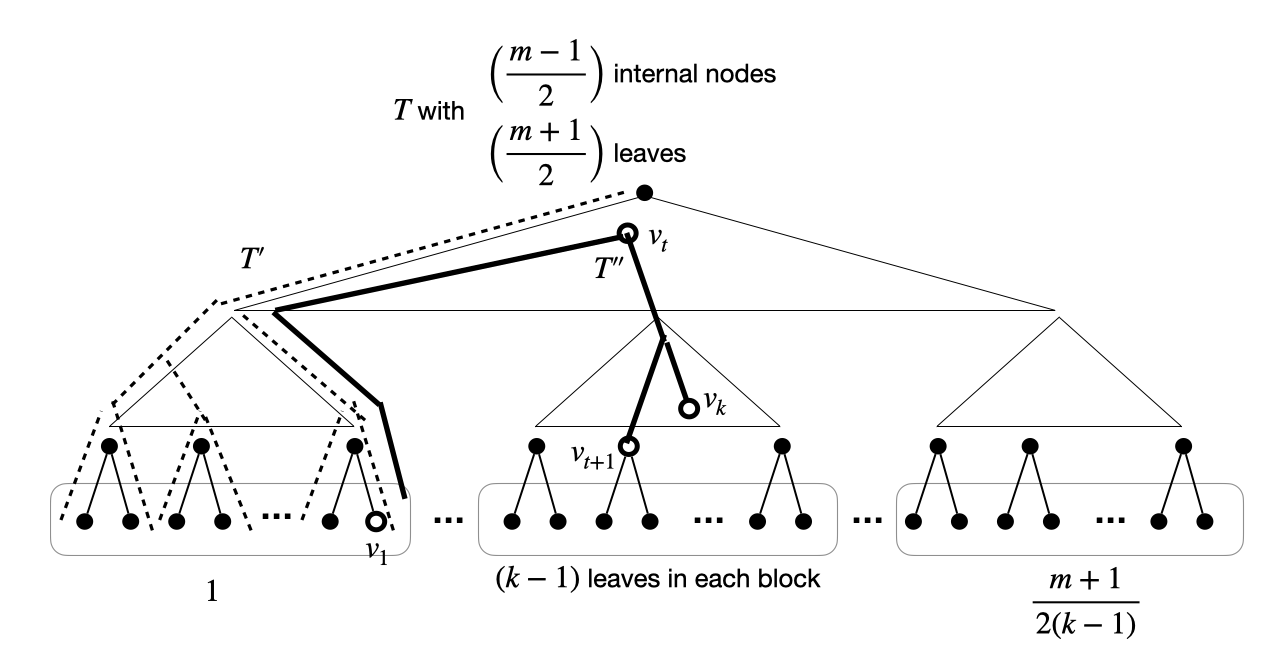}
\caption{The complete rooted binary tree $T$ with $m$ nodes constructed to produce a graph in $\mathcal{G}^c_k$. $T'$ is one of the sub-trees of $T$ with exactly $k$ leaves corresponding to a dependent vertex in $U'$. $T''$ is the sub-tree corresponding to a dependent vertex in $V(H)\backslash(U \cup U')$ constructed from $\{v_1,\ldots,v_t,\ldots,v_k\}$ which are the $k$ selected nodes of $T$.}
\label{fig:example}
\end{figure}

\noindent
Thus, $H$ is defined by $\mathcal{J}=\{J_1,\ldots,J_{n-m-F}\}$ where each $J_i$ defines a sub-tree $T_i$. For an example construction refer Figure~\ref{fig:example}. From the construction above we have the following lemma.
\begin{lemma}
    \label{lem:Gprimelsuperset1}
    $\mathcal{G}^c_k \subseteq \mathcal{G}'_k$.
\end{lemma}
\begin{proof}
From the construction given above, any graph $G \in \mathcal{G}^c_k$ has a fixed $U \subset V(G)$ where $|U|=m$ that is colored using colors $\{1,\ldots,m\}$ such that there exists a tree model of $G$ with:
    \begin{enumerate}
        \itemsep0em
        \item vertices in $U$ corresponding to single node sub-trees, 
        \item sub-trees corresponding to vertices in $U$ are colored using colors $\{1,\ldots,m\}$,  
        \item sub-trees corresponding to vertices in $V(G) \backslash U$ with at most $k$ leaves are uncolored, and
        \item sub-trees corresponding to vertices in $U'$ have $k$ leaves.
    \end{enumerate}
    Since $\mathcal{G}^c_k$ is a special class of partially colored chordal graphs with leafage $k$, $\mathcal{G}^c_k \subset \mathcal{G}'_k$. 
\end{proof}

\noindent
{\bf Computing $|\mathcal{G}^c_k|$}. In order to compute $|\mathcal{G}^c_k|$ and use Proposition~\ref{prop:partialcoloring}, we first note the following from Lemma~\ref{lem:Gprimelsuperset1}.

\begin{proposition}
    \label{lem:Gprimelb}
    $\log |\mathcal{G}'_k| \ge \log |\mathcal{G}^c_k|$.
\end{proposition}

\noindent
Let $K$ denote the set of all possible $\mathcal{J}$'s. We have the following useful proposition and lemma.

\begin{proposition}
    \label{prop:distsubtree1}
    For $1 \le i, i' \le n-m-F, 1 \le j \le m$, if $J_i \ne J_{i'}$ then wlog there exists $a_j \in U$ such that $a_j \in J_i$ and $a_j \notin J_{i'}$.
\end{proposition}


\begin{lemma}
    \label{lem:distinct_graphs}
    Let $\mathcal{J}, \mathcal{J}' \in K$ where $\mathcal{J}=\{J_1,\ldots,J_{n-m-F}\}$ and $\mathcal{J}'=\{J'_1,\ldots,J'_{n-m-F}\}$ such that for $1 \le s \le n-m-F, J_s \ne J'_s$. Also, let $H_1$ and $H_2$ be the graphs generated from $\mathcal{J}$ and $\mathcal{J}'$, respectively, and $u$ be the vertex corresponding to $s$. Then, $N_{H_1}(u) \cap U \ne N_{H_2}(u) \cap U$ where $N_{H_1}(u)$ and $N_{H_2}(u)$ are the neighbours of $u$ in $H_1$ and $H_2$, respectively.
\end{lemma}
\begin{proof}
    From Proposition~\ref{prop:distsubtree1}, we know that for $1 \le j \le m$, if $a_j \in J_s$ and 
    $a_j \notin J'_s$ then $a_j \in N_{H_1}(u) \cap U$ and $N_{H_2}(u) \cap U$. Hence, the lemma.
\end{proof}

\noindent
The following is the central lemma used to obtain the lower bound. For $1 \le s \le n-m-F$, let $\mathcal{J}(H)$ denote the $\mathcal{J} \in K$ that produces the graph $H \in \mathcal{G}^c_k$. 

\begin{lemma}
\label{lem:distinct}
    Let $\langle H_1,U,g\rangle, \langle H_2,U,g\rangle$ be constructed from $\mathcal{J},\mathcal{J}' \in K$. Then $\langle H_1,U,g\rangle$ and $\langle H_2,U,g\rangle$ are same if and only if $\mathcal{J} = \mathcal{J}'$.
\end{lemma}
\begin{proof}
    We prove both directions as follows:
    \begin{enumerate}
        \item If $\langle H_1, U, g \rangle=\langle H_2, U, g \rangle$ then $\mathcal{J}=\mathcal{J}'$. We prove its contrapositive, that is, if $\mathcal{J} \ne \mathcal{J}'$ then $\langle H_1, U, g \rangle \ne \langle H_2, U, g \rangle$. If $\mathcal{J} \ne \mathcal{J}'$ then for $1 \le s \le n-m-F$ there exists $J_s \ne J'_s$. Let $V$ denote the set of vertices in $H_1$ and $H_2$. By Lemma~\ref{lem:distinct_graphs}, there exists $u \in V\backslash U$ corresponding to $s$ such that $N_{H_1}(u) \cap U \ne N_{H_2}(u) \cap U$ where $N_{H_1}(u)$ and $N_{H_2}(u)$ are the neighbours of $u$ in $H_1$ and $H_2,$ respectively. Thus, $E(H_1) \ne E(H_2)$ and so by definition $\langle H_1, U, g \rangle \ne \langle H_2, U, g \rangle$.
        \item If $\mathcal{J} = \mathcal{J}'$ then $\langle H_1, U, g \rangle$ and $\langle H_2, U, g \rangle$ are same by construction.
    \end{enumerate}    
\end{proof}

\noindent
In order to obtain $|\mathcal{G}^c_{t,d}|$ we prove the following lemma first. 

\begin{lemma}
    \label{lem:Gc_size}
    $|\mathcal{G}^c_k| = |K|$.
\end{lemma}
\begin{proof}
    The proof establishes a bijection between $\mathcal{G}^c_{t,d}$ and $K$ as follows. 
    \begin{itemize}
        \itemsep0em
        \item Every graph of $\mathcal{G}^c_{t,d}$ is produced by some element of $K$. Every $\mathcal{J} \in K$ produces a $G \in \mathcal{G}^c_{t,d}$. Since every set of nodes obtained from $\mathcal{J}$ describes some $(n-m-F)$ sub-trees with at most $k$ leaves each such that they represent a graph this is true.
        \item For every $\mathcal{J} \in K$, a distinct graph of $\mathcal{G}^c_{t,d}$ is produced. As per Lemma~\ref{lem:distinct}, if $\mathcal{J}, \mathcal{J}' \in K$ are different then the graphs $H,H' \in \mathcal{G}^c_{t,d}$ produced from them are different.
    \end{itemize}
    Thus, $|\mathcal{G}^c_{t,d}| = |K|$.    
\end{proof}

\noindent
The following is an important lemma.
\begin{lemma}
    \label{lem:G_k}
    $\log |\mathcal{G}^c_k| \ge kn \log n - kn \log k$.
\end{lemma}
\begin{proof}
    From Lemma~\ref{lem:Gc_size}, we know that $|\mathcal{G}^c_k|=|K|$. So we count number of $\mathcal{J}$'s that can be obtained. For $1 \le t \le n-m-F$, the $k$ nodes in $J_t$ can be selected in ${m \choose k}$ ways. For $n-m-F$ vertices it can be done in ${m \choose k}^{n-m-F}$ ways, that is, $|K|={m \choose k}^{n-m-F}$. Using the relation that ${m \choose k}$ equals $\big(\frac{m}{k}\big)^k$ asymptotically for very large $m$ compared to $k$ we have ${m \choose k}^{n-m-F} > \big( \frac{m}{k}\big)^{k(n-m-F)}$, that is, there are at least $\big( \frac{m}{k}\big)^{k(n-m-F)}$ distinct $\mathcal{J}$ that can be constructed. Thus, $|\mathcal{G}^c_k| = |K| > \big( \frac{m}{k}\big)^{k(n-m-F)}$. Then, for fixed $\epsilon>1$ and $m=n/\epsilon$, we have $|\mathcal{G}^c_k| \ge \big( \frac{n}{\epsilon k}\big)^{\big(1-\frac{1}{\epsilon}-\frac{1}{2\epsilon (k-1)}\big)kn}$ when $n$ tends to infinity. Taking logarithm we have,
    $$
       \log |\mathcal{G}^c_k| \ge k\Big( 1-\frac{1}{\epsilon}-\frac{1}{2\epsilon (k-1)}\Big)n\log n - k\Big( 1-\frac{1}{\epsilon}-\frac{1}{2\epsilon (k-1)}\Big)n\log k 
    $$
    As $n$ tends to infinity,
    $$
         \log |\mathcal{G}^c_k| \ge kn \log n - kn \log k.
    $$
\end{proof}

\kleafagelb*
\begin{proof}
    Applying  Proposition~\ref{lem:Gprimelb} to Lemma~\ref{lem:colorGprime1} we have,
    \begin{align*}
        \label{eq:eq1}
        \log |\mathcal{G}_k| &\ge \log |\mathcal{G}'_k| - n \log n + (n-m) \log (n-m) + 2n - O(\log n)\\
        &\ge \log |\mathcal{G}^c_k| - n \log n - O(\log n)
    \end{align*}
    Applying Lemma~\ref{lem:G_k} to the above expression, we have,
    \begin{align*}
        \log |\mathcal{G}_k| &\ge (k-1)n \log n -kn \log k - O(\log n)
    \end{align*}
\end{proof}

\noindent
We have the following proposition.
\begin{proposition}
    \label{prop:support}
    For $k>0$, in $o(n^c), c >0$ and sufficiently large $n$, $\log |\mathcal{G}_k| \ge (k-1)n \log n$.
\end{proposition}

\section{Succinct Data Structure}
\label{sec:succrep}
The high-level procedure used to obtain the succinct data structure for $k-$vertex leafage chordal graph $G$ given as $(T,\{T_1,\ldots,T_n\})$ is as follows:
\begin{enumerate}
    \itemsep0em
    \item From the tree $T$, obtain path graph $H$ with $(k-1)n$ vertices by decomposing each sub-tree $T_i, 1 \le i \le n$, into at most $(k-1)$ paths such that each path created corresponds to a vertex of $H$. The input $(T,\{T_1,\ldots,T_n\})$ is decomposed into paths and we get $(T,\{\mathcal{P}_1,\ldots,\mathcal{P}_n\})$ where $\mathcal{P}_i=\mathcal{P}'_i \cup \mathcal{P}''_i, 1 \le i \le n,$ where $\mathcal{P}'_i=\{P_1^i,\ldots,P_{k/2}^i\}$ and $\mathcal{P}''_i=\{P_{k/2+1}^i,\ldots,P_{k-1}^i\}$ such that if we have $T$ and $\mathcal{P}'_i$ we can compute $\mathcal{P}''_i$.
    \item Out of the $(k-1)n$ paths, we store $kn/2$ paths of $H$ and  tree $T$ using the data structure for path graphs given in~\cite{BCNS2024}. The rest of the $(k/2-1)n$ paths are computed from the  tree $T$ and the stored $kn/2$ paths of $H$. In other words, for each $T_i, \mathcal{P}'_i$ is stored and $\mathcal{P}''_i$ is computed from $\mathcal{P}'_i$ and $T$.
\end{enumerate}

\noindent
The succinct representation for $G$ consists of the following contents:
\begin{enumerate}
    \itemsep0em
    \item $(T, \mathcal{P}'_1 \cup \mathcal{P}'_2 \cup \ldots \mathcal{P}'_n)$ is stored using the method for storing path graphs given in~\cite{BCNS2024}. This data structure stores the tree $T$ using the data structure of Lemma~\ref{lem:2ntree}.
    \item for $1 \le j \le k/2$, the mapping from indices of $P_j^i \in \mathcal{P}'_i$ to the indices of paths in the data structure of~\cite{BCNS2024} that stores $(T,\mathcal{P}'_1 \cup \mathcal{P}'_2 \cup \ldots \cup \mathcal{P}'_n)$ as mentioned in the above step.
\end{enumerate}

\noindent
The following lemma is important before getting into the details of the data structure.
\begin{lemma}
    \label{lem:odd2even}
    Consider clique tree $T$ of $k-$vertex leafage chordal graph $G$ such that $k$ is odd. Then there exists a tree model, denoted $T'$, for $G$ with at most $3n$ nodes such that for $1 \le i \le n, T'_i$ is the sub-tree in $T'$ corresponding to $v_i \in V(G)$ and number of leaves of $T_i$ is even.
\end{lemma}
\begin{proof}
To get $T'$ from $T$, pair the leaves of $T_i$ with odd number of leaves. Since $k$ is odd we will have one isolated leaf. Let this leaf be $a \in V(T_i)$. Add edges $\{a_1,a\},\{a_2,a\}$ to $E(T_i)$ to make number of leaves of $T_i$ even. $T$ can contain at most $n$ nodes and there can be at most $2n$ nodes added to make the number of leaves of sub-tree even in $T'$.   
\end{proof}

\noindent
In this paper, we only consider even $k \ge 2$ as any $T_i$ with odd number of leaves can be converted to even number of leaves as per Lemma~\ref{lem:odd2even}.  Note that the total increase in number of nodes of $T$ is only constant times $n$. We first present the method to transform a $k-$vertex leafage chordal graph $G$ to path graph $H$ with $(k-1)n$ vertices followed by the construction of the data structure.

\subsection{Transforming $(T,\{T_1,\ldots,T_n\})$ to $(T,\mathcal{P}'_1 \cup \ldots \cup \mathcal{P}'_n)$}
\label{sec:transformation}

The transformation from $(T,\{T_1,\ldots,T_n\})$ to $(T,\mathcal{P}'_1 \cup \ldots \cup \mathcal{P}'_n)$ happens in  two steps as below:
\begin{enumerate}[a.]
    \itemsep0em 
    \item pre-process the  tree $T$ into an ordinal tree, and
    \item decompose each $T_i, 1 \le i \le n$, into $\mathcal{P}'_i \cup \mathcal{P}''_i$.
\end{enumerate}

\noindent
Pre-processing is done using the method as explained in Section 3 of~\cite{BCNS2024}. We describe it here for ease of reading.

\noindent
{\bf Pre-processing the  Tree.} Fix a root node for $T$ and perform heavy path decomposition on it. For $v \in V(T)$ order its children $(w_1,\ldots,w_c)$ such that $\{v,w_1\}$ is a heavy edge. Let the children adjacent to $v$ by light edges $(w_2,\ldots,w_c)$, be ordered arbitrarily. This ordering of children of a node of the  tree makes it an ordinal tree. Label the nodes of this ordinal tree based on the pre-order traversal; see Section 12.1 of~\cite{CLRS} for more details of the pre-order traversal of trees. Labels assigned to nodes in this manner are called the \textit{pre-order labels of the nodes}. Throughout the rest of our paper, this ordinal rooted  tree labeled with pre-order will be referred to as the tree model; see Section 3 of~\cite{BCNS2024} for more details. After pre-processing, $T$ is an ordinal tree on which heavy-path decomposition is performed such that all heavy edges are left aligned and nodes are numbered based on the order in which they are visited in the pre-order traversal  of  $T$. Since $G$ is $k-$vertex leafage chordal graph, the number  of leaves of $T_i$ is at most $k$. We obtain the tree representation of the path graph $H \in \mathcal{G}_k(2,(k-1)n)$, denoted $(T,\mathcal{P}'_1 \cup \mathcal{P}''_1 \cup \ldots \cup \mathcal{P}'_n \cup \mathcal{P}''_n)$, by carefully selecting the $k-1$ paths from sub-tree $T_i, 1\le i \le n$. The function $\texttt{getPaths}$ that does this is explained next.

\noindent
{\bf Function $\texttt{getPaths}$.} Given the sub-tree $T_i$ of $T$ with number of leaves at most $k$, the function returns the set of paths $\mathcal{P}'_i$ such that:
\begin{enumerate}
    \itemsep0em
    \item $|\mathcal{P}'_i| \le k/2$,
    \item $\mathcal{P}'_i$ along with $T$ can uniquely determine $\mathcal{P}''_i$, and
    \item $T_i = \mathcal{P}'_i \cup \mathcal{P}''_i$.
\end{enumerate}
The function can be implemented as follows. Let the  leaves of $T_i$ labeled based on the order in which nodes are visited in the pre-order traversal of the tree, be $\{l_1,\ldots,l_{k_i}\}, k_i \le  k$. For $1 \le j \le k_i/2$, pair the smallest leaf $l_j$ with the largest leaf $l_{k_i-j+1}$ to get path $P_j$. Let the set of paths obtained from $T_i$ be denoted by $\mathcal{P}'_i$.

\noindent
{\bf Ordering Paths in $\mathcal{P}'_i$.} Let $\mathcal{P}'_i=\{P_1^i,\ldots,P_{k_i}^i\}, 1 \le k_i \le k/2$. For $1 \le j < j' \le k_i/2,$ $P_j^i=(a_j,b_j)$ and $P_{j'}^i=(a_{j'},b_{j'})$, $P_j^i \prec_{\mathcal{P}} P_{j'}^i$ if $a_j \le a_{j'} \le b_{j'} \le b_j$. $\prec_{\mathcal{P}}$ is a total order on $\mathcal{P}_i$, since for any two paths $P_j^i, P_{j'}^i \in \mathcal{P}_i$, either $a_j \le a_{j'} \le b_{j'} \le b_j$ or $a_{j'} \le a_j \le b_j \le b_{j'}$.

\noindent
{\bf Relation Between $P_j^i, P_{j+1}^i \in \mathcal{P}'_i$.} For $1 \le j \le k/2-1$, the paths connecting $P_j^i$ and $P_{j+1}^i$ are the paths in $\mathcal{P}''_i$. They are computed using the function $\texttt{connector}$. Let $P_j^i=(a_j,b_j)$ and $P_{j+1}^i=(a_{j+1},b_{j+1})$. We define $\texttt{connector}$ based on the following two conditions:
\begin{enumerate}
    \itemsep0em
    \item $P_j^i$ and $P_{j+1}^i$ are not intersecting: Since $P_j^i \prec_{\mathcal{P}} P_{j+1}^i$ and $P_{j+1}^i$ is contained inside a sub-tree rooted at $\texttt{lca}(a_j,b_j)$, 
    $\texttt{connector}(i,j,j+1)=(\texttt{lca}(a_j,b_j),\texttt{lca}(a_{j+1},b_{j+1}))$ connects $P_j^i$ and $P_{j+1}^i$. 
    \item $P_j^i$ and $P_{j+1}^i$ are intersecting: $\texttt{connector}(i,j,j+1)=\texttt{NULL}$ in this case.
\end{enumerate}

\noindent
We have the following useful lemmas.

\begin{lemma}
\label{lem:Ti}
For $1 \le i \le n$, the following holds:
\begin{enumerate}
    \itemsep0em
    \item $|\mathcal{P}'_i| \le k/2$ and $|\mathcal{P}''_i| \le k/2-1$,
    \item ${\normalfont T_i = \mathcal{P}'_i \cup \mathcal{P}''_i}$ where ${\normalfont \mathcal{P}''_i=\texttt{connector}(i,1,2) \cup \texttt{connector}(i,2,3) \cup \ldots \cup}$ \\ ${\normalfont \texttt{connector}(i,k_i/2-1,k_i/2)}$, and 
    \item $T=\bigcup\limits_{i=1}^n F_i$ where ${\normalfont F_i= \mathcal{P}'_i \cup \mathcal{P}''_i}$.
\end{enumerate}
\end{lemma}
\begin{proof} The proofs are as follows:
\begin{enumerate} 
    \itemsep0em
    \item There are at most $k$ leaves and every leaf is paired with another to get a path. Thus, the number of paths is $k/2$. In the case of odd number of leaves, we get a path with endpoints $(a,b)$ where $a$ and $b$ are the leaf and root of  $T_i$, respectively. Thus, the number of paths in $\mathcal{P}_i$ is $\lceil k/2 \rceil$. 
    \item A tree $T'$ is the union of paths from the leaves to the root. We will prove that there is a path from every endpoint of $\mathcal{P}_i$ to the root of $T_i$. The proof is by induction on number of leaves of $T_i$. As the base case we have for $l=2$, $T_i=P_1$. We can see that the result is true as there are no connectors. Let it be true for some $l'>1$. Consider the path $P_j', j' \le (l'+1)/2$, obtained by pairing the last two available leaves as per the algorithm of $\texttt{getPaths}$. Then, there are two cases to consider.
    \begin{enumerate}
        \itemsep0em
        \item {\bf $P_{j'-1},P_{j'}$ are intersecting}: In this case, adding $P_{j'}$ to $\mathcal{P}_i$ will ensure that $T_i = \mathcal{P}_i \cup \texttt{connector}(i,1,2) \cup \texttt{connector}(i,2,3) \cup \ldots \cup \texttt{connector}(i,k_i/2-1,k_i/2), k_i=l'+1$ where $\texttt{connector}(i,k_i/2-1,k_i/2)=\texttt{NULL}$. Since  $P_{j'-1},P_{j'}$ are intersecting there is a path from the end points of $P_{j'}$ to the root of $T_i$.
        \item {\bf $P_{j'-1},P_{j'}$ are non-intersecting}: In this case, adding $P_{j'}$ to $\mathcal{P}_i$ and $\texttt{connector}(i,k_i/2-1,k_i/2) \ne \texttt{NULL}$ will ensure that $T_i = \mathcal{P}_i \cup \texttt{connector}(i,1,2) \cup \texttt{connector}(i,2,3) \cup \ldots \cup \texttt{connector}(i,k_i/2-1,k_i/2), k_i=(l'+1)/2$. Since $\texttt{connector}(i,k_i/2-1,k_i/2) \ne \texttt{NULL}$ it connects $P_{j'-1}$ and $P_{j'}$ and thus, there is a path from end points of $P_{j'}$ to the root of $T_i$.
    \end{enumerate}
    \item Since, $T=\bigcup\limits_{i=1}^n T_i$ is a property of chordal graphs and $T_i=F_i$ as shown above; see~\ref{thm:chordal} for characterisations of chordal graphs.
\end{enumerate}
\end{proof}

\begin{lemma}
\label{lem:lightconn}
For $1 \le j \le k_i/2$, let $P_j^i \in \mathcal{P}'_i$ such that $P_j^i=(a_j^i,b_j^i)$, $e={\normalfont \texttt{lca}}(a_j^i,b_j^i)$ and $Q_1=(e,b_j^i)$. Also, let ${\normalfont \texttt{connector}(i,j,j+1) \ne \texttt{NULL}}$. Then,
\begin{enumerate}
    \itemsep0em
    \item $(e,c) \in E(Q)$ is a light edge, and
    \item $(e,c') \in E({\normalfont \texttt{connector}}(i,j,j+1))$ is a light edge.
\end{enumerate}
\end{lemma}

\noindent
Finally, we have the following proposition and lemma regarding adjacency and neighbourhood of $v \in V(G)$.

\begin{proposition}
\label{prop:adj}
$\{u,v\} \in E(G)$ if and only if there exists $P \in \mathcal{P}'_u \cup \mathcal{P}''_u, Q \in \mathcal{P}'_v \cup \mathcal{P}''_v$, such that $P \cap Q \ne \phi$.
\end{proposition}

\noindent
For $1 \le j \le k_v, 1 \le j' \le k_v-1$, let $\beta(P_j^v)$ and $\beta(\texttt{connector}(v,j',j'+1))$ represent the paths in $\mathcal{P}'_i \cup \mathcal{P}''_i \cup \ldots \cup \mathcal{P}'_n \cup \mathcal{P}''_n$ intersecting $P_j^v$ and $\texttt{connector}(v,j',j'+1)$, respectively.

\begin{lemma}
\label{lem:beta}
For $1 \le i \le n$, let $\mathcal{P}_i = \mathcal{P}'_i \cup \mathcal{P}''_i$. For $P \in \mathcal{P}_i$ the following holds:
\begin{enumerate}
    \itemsep0em
    \item $|\beta(P)| \le (k-1)d_i$, and
    \item $\sum\limits_{P \in \mathcal{P}_i} |\beta(P)| \le (k-1)^2 d_i$ where $d_i$ is degree of $v_i \in V(G)$.
\end{enumerate}
\end{lemma}
\begin{proof}
By Lemma~\ref{lem:Ti}, the  tree $T_i$ corresponding to $v_i \in V(G)$ is decomposed into set of paths $\mathcal{P}_i = \mathcal{P}'_i \cup \mathcal{P}''_i$. 
\begin{enumerate}
    \itemsep0em
    \item Since each $P \in \mathcal{P}'_i$  can have at most $d_i$ trees corresponding to vertices of $G$ intersecting it and each tree has at most $k-1$ paths in the path graph $H$, we have $|\beta(P)| \le (k-1)d_i$.
    \item $\sum\limits_{P \in \mathcal{P}_i} |\beta(P)| \le (k-1)^2 d_i$, since $|\mathcal{P}_i|=k-1$ and  $|\beta(P)| \le (k-1)d_i$.
\end{enumerate}
\end{proof}

\subsection{Construction}

\noindent
{\bf Index of Paths in $G$ and $H$.}  For $P_1^i,\ldots,P_{k_i}^i \in \mathcal{P}'_i$, we index paths of $G$ and $H$ in the following ways:
\begin{enumerate}
    \itemsep0em
    \item in $G$ paths are ordered $\{P_1^1,\ldots,P_{k_1}^1,\ldots,P_1^n,\ldots,P_{k_n}^n\}$ where $P_1^i, 1 \le i \le n$  is in the increasing order of the starting nodes of $P_1^i$ and for $1 \le j \le k_i, P_j^i$ are ordered based on $\prec_{\mathcal{P}}$, and
    \item in $H$ paths are ordered based on their starting nodes; this is as per the storage scheme followed in~\cite{BCNS2024} for path graphs.
\end{enumerate}

\noindent
We will order the vertices of $G$ based on the order of the starting nodes of the first paths $P_1^i \in \mathcal{P}_i$ of each $T_i$. The data structure for $k-$vertex leafage chordal graphs consists of the following components. We distinguish the case when $k$ is odd or even only when the difference alters the higher order term in the space complexity, else we assume $k$ is even. We will show later that $k$ being odd or even does not impact the space complexity of our data structure.

\noindent
{\bf Path Graph $H$.} The path graph $H$ that we store is $(T, \bigcup\limits_{i=1}^n \mathcal{P}'_i)$ where $|\bigcup\limits_i \mathcal{P}'_i| \le nk/2$ if $k$ is even and $|\bigcup\limits_i \mathcal{P}'_i| \le \big(\frac{k-1}{2}+\frac{1}{2}\big)n$ if $k$ is odd. For $1 \le i \le n$, we do not store $\texttt{connector}(i,1,2),\texttt{connector}(i,2,3),\ldots, \texttt{connector}(i,k_i-1,k_i)$ but compute it when required. In other words, $T_i$ can be computed from $T$ and $\mathcal{P}'_i$. Thus, path graph $H$ can be stored using $\frac{nk}{2} \log n + o(nk \log n)$ bits as per Lemma~\ref{lem:succpg} if $k$ is even and $\big(\frac{k-1}{2}+\frac{1}{2}\big)n \log n + o(nk \log n)$ bits if $k$ is odd. The data structures used in the succinct representation of path graphs as given in Lemma~\ref{lem:succpg} does not require $T$ to be a clique tree of $H$ as long as the number of nodes in $T$ is constant times $|V(H)|$, so $(T, \bigcup\limits_{i=1}^n \mathcal{P}_i)$ is a valid input that represents $H$.

\noindent
{\bf Array $K$.} $K$ is a one dimensional array of length $n$. For $1 \le i \le n, K[i]$ stores $|\mathcal{P}'_i|$, that is, the number of paths that the sub-tree $T_i$ gets decomposed into. $K$ takes at most $n \log k$ bits of space. The following function is supported.
\begin{itemize}
    \itemsep0em
    \item $\texttt{getSize}(i):$ Given $1 \le i \le n,$ the function returns $K[i]$ else 0.
\end{itemize}

\noindent
{\bf Bit-vector $F$.} $F$ is a bit vector of length at most $kn/2$. Let the  index of the first path of each vertex in the order $(P_1^1,\ldots,P_{k_1}^1,\ldots,P_1^n,\ldots,P^n_{k_n})$ be $\{i_1,\ldots,i_n\}$ where $P_j^i \in \mathcal{P}'_i, 1 \le i \le n, 1 \le j \le k_i$. Since the vertices are numbered based on the start node of their first paths, $\{i_1,\ldots,i_n\}$ is an increasing sequence of numbers with maximum value of $kn/2$ and can be stored using the differential encoding scheme of Lemma~\ref{lem:ssn}. This data structure is also denoted by $F$ and takes at most $kn + o(kn)$ bits of space. The following function is supported.
\begin{itemize}
    \itemsep0em
    \item $\texttt{getIndex}(i):$ Given $1 \le i \le n,$ the function returns $\texttt{accessNS}(i,F)$ else 0. $\texttt{accessNS}(i,F)$ returns the value at the $i-$th position in the sequence $\{i_1,\ldots,i_n\}$.
\end{itemize}

\noindent
{\bf Bit-vector $D$.} Consider the order of paths $O=(P_1^1,\ldots,P_{k_1}^1,\ldots,P_1^n,\ldots,P_{k_n}^n)$. For $1 \le j \le kn/2, D[j]=i$ if $O[j]$ is a path corresponding to $u_i \in V(G)$. Since  $D$ is an increasing sequence it can be stored  using the data structure of Lemma~\ref{lem:ssn} taking $O(kn)$ bits. $D$ contains at most $n$  1's and $kn/2$ 0's.
 
\noindent
{\bf Permutation $\pi$.} $\pi$ store the mapping of indices of paths in $H$ to paths in $G$ by storing the indices of paths $(P_2^1,\ldots,P_{k_1}^1,\ldots,P_2^n,\ldots,P^n_{k_n})$ in $G$ in that order.  $\pi$ is stored using the data structure of Lemma~\ref{lem:perm} and takes $\left(\frac{k}{2}-1 \right)n \log n + o(kn \log n)$ bits of space if $k$ is even and $\left(\frac{k-1}{2}-\frac{1}{2}\right)n \log n + o(kn \log n)$ bits if $k$ is odd.

\noindent
{\bf Function $\texttt{alpha}$.} The function $\texttt{alpha}$ takes $1 \le i \le n$, as input and returns the indices of paths in $\mathcal{P}'_i$ corresponding to $u_i \in V(G)$ in the  tree representation of $H$. Function returns $\{\pi(p),\ldots,\pi(p+s)\}$ where $p \leftarrow \texttt{getIndex}(i)-i+1$ and $s \leftarrow \texttt{getSize}(i)$. $\texttt{getIndex}(i)-i+1$ is the index of $P_2^i$ in $(P_2^1,\ldots,P_{k_1}^1,\ldots,P_2^n,\ldots,P^n_{k_n})$.

\begin{lemma}
\label{lem:alpha}
Given the index $1\le i \le n$, of $u_i \in V(G)$, ${\normalfont \texttt{alpha}(i)}$ returns the indices in $H$ of paths in $\mathcal{P}'_i$ corresponding to $u_i$ in $O(k_i)$ time.
\end{lemma}

\noindent
{\bf Bit-vector $C$.} $C$ is an array of $n$ bit-vectors each of length $(k/2-1)$. For $1 \le i \le n, 1 \le j \le k/2-1$, $C[i][j]=1$ if $\texttt{connector}(i,j,j+1) \ne \texttt{NULL}$ else $C[i][j]=0$. $C$ takes a total space of $n(k/2-1)$ bits. The following function is supported.
\begin{itemize}
    \itemsep0em
    \item $\texttt{isConnector}(i,j):$ Given $1 \le i \le n, 1 \le j \le k/2-1$, the function returns true if $C[i][j]=1$ else false.
\end{itemize}

\noindent
{\bf Function $\texttt{getCPath}$.} For $1 \le i \le n, 1 \le j \le k_i/2-1$ and $P_j^i=(a_j,b_j), P_{j+1}^i=(a_{j'},b_{j'})$, the function takes paths $P_j^i$ and $P_{j+1}^i$ as input and returns $\texttt{connector}(i,j,j+1)$ if $\texttt{isConnector}(i,j))$ is true else $\texttt{NULL}$. Note that $H$ stored as per Lemma~\ref{lem:succpg} contains the  tree $T$ and this allows us to perform the $\texttt{lca}$ operation.

\begin{lemma}
\label{lem:getCPath}   
For $1\le i \le n, 1 \le j \le k_i/2-1,$ given paths $P_j^i, P_{j+1}^i$ as input, the function ${\normalfont \texttt{getCPath}(P_j^i, P_{j+1}^i)}$ returns ${\normalfont \texttt{connector}(i,j,j+1)}$ in constant time.
\end{lemma}
\begin{proof}
We have $P_j^i=(a_j,b_j), P_{j+1}^i=(a_{j'},b_{j'})$ as input. Using bit-vector $C$ we can check if $\texttt{isConnector}(i,j))$ is true. If true then we can get $\texttt{connector}(i,j,j+1)=(\texttt{lca}(a_j,b_j),\texttt{lca}(a_{j'},b_{j'}))$ in constant time as $\texttt{lca}$ takes constant time. Else, $\texttt{connector}(i,j,j+1)=\texttt{NULL}$. So $\texttt{getCPath}(P_j^i, P_{j+1}^i)$ can be completed in constant time. 
\end{proof}

\begin{lemma}
\label{lem:kleafsuccrep}
For $k>1$ and in $o(n^c), c >0$, there exists a $(k-1)n\log n+o(kn \log n)$-bit succinct data structure for the class of $k-$vertex leafage chordal graphs.
\end{lemma}
\begin{proof}
The data structure for $k-$vertex leafage chordal graphs consists of the following components:
\begin{enumerate}
    \itemsep0em
    \item $H$ takes $\frac{nk}{2} \log n + o(nk \log n)$ bits of space when $k$ is even and $\big(\frac{k-1}{2}+\frac{1}{2}\big)n \log n + o(nk \log n)$ bits if $k$ is odd,
    \item $K$ takes $n\log k$ bits of space,
    \item $F, C,$ and $D$ take $O(kn)$ bits of space each, and
    \item $\pi$ takes $\left(\frac{k}{2}-1 \right) n\log n + o(kn \log n)$ bits of space and $\left(\frac{k-1}{2} -\frac{1}{2}\right)n \log n + o(kn \log n)$ bits if $k$ is odd.
\end{enumerate}
Thus, the total space taken is $(k-1)n \log n + o(kn \log n)$ bits. From Theorem~\ref{thm:kleafagelb} we have,$ \ge (k-1) n\log n \le \log|\mathcal{G}_k| + \frac{3kn}{2} \log k + \frac{3kn}{2} +O(\log n)$. Substituting in $(k-1)n \log + o(kn \log n)$ we have,
\begin{align*}
    (k-1)n \log n + o(kn \log n) &\le \log|\mathcal{G}_k| + \frac{3kn}{2} \log k + \frac{3kn}{2} +O(\log n)\\
    &\le \log |\mathcal{G}_k| + o((k-1)n \log n)
\end{align*}
Using Proposition~\ref{prop:support}, we have,
\begin{align*}
    (k-1)n \log n + o(kn \log n) &\le \log|\mathcal{G}_k|+ o(\log|\mathcal{G}_k|)
\end{align*}
Thus, the data structure is succinct.
\end{proof}

\begin{figure}[ht]
\centering
\includegraphics[width=14cm,height=8.5cm]{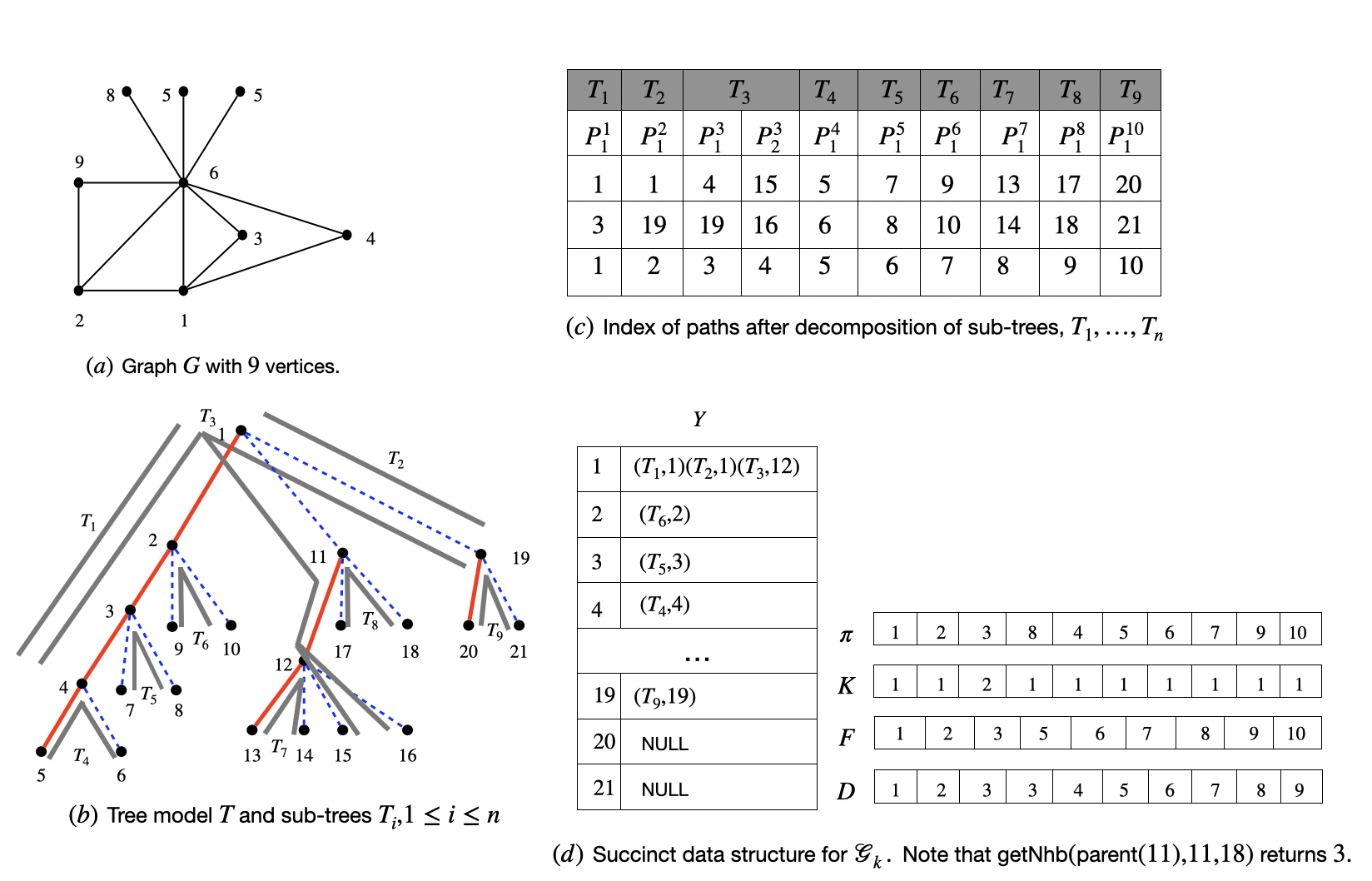}
\caption{(a) An example 4-vertex leafage chordal graph $G$, (b)  tree representation of $G$  after pre-processing, (c) the index of the paths generated from the sub-trees along with their start node (second row), end node (third row), and index (forth row), (d) the components of the succinct data structure for $G$. }
\label{fig:example}
\end{figure}

\subsection{Adjacency and Neighbourhood Queries}
The following lemma is useful.
\begin{lemma}
\label{lem:intersectingconn}
Let $T_i$ and $T_{i'}$ intersect and $P_1^i=(a_1^i,b_1^i), P_1^{i'}=(a_1^{i'},b_1^{i'})$. Wlog,
\begin{enumerate}
    \itemsep0em
    \item if $a_1^i \le a_1^{i'} \le b_1^{i'} \le b_1^i$, then
    \begin{enumerate}[a.]
        \itemsep0em
        \item there exists $P_j^i, 1 \le j \le k_i/2$ such that $P_j^i \cap P_1^{i'} \ne \phi$, or
        \item there exists ${\normalfont\texttt{connector}}(i,j,j+1) \cap P_j^{i'} \ne \phi,  1 \le j \le k_i/2-1$.
    \end{enumerate}
    \item Else, $a_1^i \le a_1^{i'} \le b_1^i \le b_1^{i'}$ or $a_1^{i'} \le a_1^i \le b_1^{i'} \le b_1^i$.
\end{enumerate}

\end{lemma}

\noindent
{\bf Adjacency Query.} Given indices of two vertices $u_i,u_j \in V(G)$ and the succinct representation for the $k-$vertex leafage chordal graphs, the adjacency query returns true if $\{u_i,u_j\} \in E(G)$ else false. The characterisation  of $\{u_i,u_j\} \in E(G)$ in terms of path intersections in $(T,\mathcal{P}_1 \cup \ldots \cup \mathcal{P}_n)$ where $\mathcal{P}_i=\mathcal{P}'_i \cup \mathcal{P}''_i, 1 \le i \le n,$ is given by Proposition~\ref{prop:adj}. Algorithm~\ref{alg:adjacencykleaf} gives an implementation of the adjacency query.

\begin{algorithm}[ht!]
\DontPrintSemicolon
    \SetKwFunction{Fadjacency}{adjacency}
    \SetKwProg{Fn}{Function}{:}{}
    \Fn{\Fadjacency{$i,j$}}{
        $\mathcal{P}'_i \leftarrow \texttt{alpha}(i), \mathcal{P}'_j \leftarrow \texttt{alpha}(j), s \leftarrow \texttt{NULL}$\;
        Let $P_1^i=(a_1^i,b_1^i)$  and $P_1^j=(a_1^j,b_1^j)$ obtained using $\texttt{pathep}(\texttt{getIndex}(i))$ and $\texttt{pathep}(\texttt{getIndex}(j))$, respectively.\;
        Check if $P_1^i$ and $P_1^j$ intersect as follows:\;
        \If{${\normalfont \texttt{adjacencyPG}(H, P_1^i, P_1^j)}$ is true}{
            \KwRet true\;
        }
        Check if endpoints of first path of one falls within the range of the end points of the first path of the other as follows:\;
        \If{$a_1^i \le a_1^j \le b_1^j \le b_1^i$}{$s \leftarrow i$}\; 
        \If{$a_1^j \le a_1^i \le b_1^i \le b_1^j$}{$s \leftarrow j$}\;
        \If{${\normalfont s \ne \texttt{NULL}}$}{
            \If{$s=i$}{
                \ForEach{${\normalfont 1 \le t  \le \texttt{getSize}(i)-1}$}{
                    \If{${\normalfont \texttt{adjacencyPG}(H, \texttt{pathep}(\mathcal{P}'_i[t]),\texttt{pathep}(\mathcal{P}'_j[1]))}$ is true}{\KwRet true\;}
                }
                \ForEach{${\normalfont 2 \le t \le \texttt{getSize}(i)-1}$}{
                    \If{$t=2$ and ${\normalfont \texttt{adjacencyPG}(H, \texttt{getCPath}(\texttt{getIndex}(i),\mathcal{P}'_i[t]),\texttt{pathep}(\mathcal{P}'_j[1]))}$ is true}{
                        \KwRet true\;
                    }
                    \Else{
                        \If{${\normalfont \texttt{adjacencyPG}(H, \texttt{getCPath}(\mathcal{P}'_i[t-1],\mathcal{P}'_i[t]),\texttt{pathep}(\mathcal{P}'_j[1]))}$ is true}{
                            \KwRet true\;
                        }
                    }
                }
            }
            \If{$s=j$}{
                Perform the same steps as done in Line 15 for the case $s=i$ with $i$ and $j$ interchanged.
            }
        }
        \KwRet false\;
    }  
\caption{Given indices $i,j$ as input, the function returns true if $\{u_i,u_j\} \in E(G)$ else false.}    
\label{alg:adjacencykleaf}
\end{algorithm} 

\begin{lemma}
\label{lem:adjacencykleaf}
For $k\in o(n^c), c>0$, the class of $k-$vertex leafage chordal graphs have a $(k-1)n \log n + o(kn \log n)$ bit succinct data structure that supports adjacency query  in $O(k \log n)$ time.    
\end{lemma}
\begin{proof}
As per Lemma~\ref{lem:kleafsuccrep}, $k-$vertex leafage chordal graphs have a $(k-1)n \log n+o(kn \log n)$ bit succinct data structure. Given the succinct representation and indices $i,j$ of vertices in $G$, Algorithm~\ref{alg:adjacencykleaf} checks if $\{u_i,u_j\} \in E(G)$. In Line 1 of Algorithm~\ref{alg:adjacencykleaf}, the paths corresponding to $T_i$ and $T_j$ are obtained using $\texttt{alpha}$ of Lemma~\ref{lem:alpha}  and in Line 2, the endpoints of $P_1^i$ and $P_1^j$ are obtained using $\texttt{pathep}$ of Lemma~\ref{lem:succpg}. The endpoints $(a_t^i,b_t^i)$ of $P_t^i \in \{P_1^i,\ldots,P_{k_i}^i\}$ is such that $a_1^i \le a_t^i \le b_t^i \le b_1^i$; same is true for $j$. The following two cases arise based on Lemma~\ref{lem:intersectingconn}:
\begin{enumerate}
    \itemsep0em
    \item $P_1^i$ and $P_1^j$ are not contained inside each other:  If the sub-trees intersect then, this corresponds to the case (b) of Lemma~\ref{lem:intersectingconn}. In this case, we check if $P_1^i$ and $P_1^j$ intersect or not. In Algorithm~\ref{alg:adjacencykleaf},  this is done in  Lines 4 to 6.
    \item $P_1^j$ is contained inside the sub-tree rooted at $\texttt{lca}(a_1^i,b_1^i)$ or vice versa: If the sub-trees intersect then, this corresponds to the case (a) of Lemma~\ref{lem:intersectingconn}. Wlog let $P_1^j$ be contained inside the sub-tree rooted at $\texttt{lca}(a_1^i,b_1^i)$ as shown in Line 8 of Algorithm~\ref{alg:adjacencykleaf}. In this case we need to check if $P_1^j$ intersects any one  of $P_t^i, 2 \le t \le k$. If $P_1^j$ does not intersect any  one  of $P_t^i$ then no other path of $u_i$ will intersect any path $P_t^i$. This is because the root of the sub-tree $T_j$ is $\texttt{lca}(a_1^j,b_1^j)$. The adjacency check is done using the $\texttt{adjacencyPG}$ function of path graphs data structure of Lemma~\ref{lem:succpg} and takes $O(k \log n)$ time for $k$ paths of $u_i$. This case is handled in Lines 15 to 24.
\end{enumerate}
The function returns false if all the other checks fail to return true. By Lemma~\ref{lem:alpha}, $\texttt{alpha}$ takes $O(k)$ time. By Lemma~\ref{lem:succpg}, $\texttt{pathep}$ and $\texttt{adjacencyPG}$ take $O(\log n)$ time. Also, $\texttt{getSize}$ and $\texttt{getIndex}$ takes constant time. Each of the  loops run at most $k$ times with loop body taking at  most $O(\log n)$ time. Thus, the total time to finish adjacency function is $O(k \log n)$.
\end{proof}

\noindent
{\bf Neighbourhood Query.} Given $v  \in V(G)$ and the succinct representation of $k-$vertex leafage chordal graph $G$, the neighbourhood query returns the neighbours of $v$. We use the following additional data structure.

\noindent
{\bf Array $Y$.} $Y$ is a one dimensional array of length at most $n$ that stores an array at each of its locations. For $1 \le i \le n$, $Y[i]$ stores the array of records where each record is of the form $(r,s)$ such that $1 \le r \le n$ is the index of the sub-tree that has node $u_i \in V(T)$ as the lca of endpoints of $P_1^r$ and for $1 \le s \le n$, node $u_s$ as the lca of endpoints of $P_{k_r}^r$. The records at $Y[i]$ are stored in the increasing order of $s$. The total space taken by $Y$ is $2n \log n$ bits as each tree takes $2 \log n$ bits and there are $n$ trees. The following function is supported.

\noindent
\begin{itemize}
    \item $\texttt{getNhb}(l,l',a)$: Given $1 \le l,l' \le n,$ the function returns the list of trees with index $1 \le j \le n$, such that lca of endpoints of $P_1^j$ is equal to $l$ and lca of endpoints of $P_{k_j}^j$ greater than or equal to $l'$ and less than or equal to $a$ in $O(\log n+d)$ time where $d$ is the number of trees returned. The function can be implemented as follows:
    \begin{enumerate}
        \itemsep0em
        \item Performing binary search on the array $A$ of records at $Y[l]$ to first obtain the range of $s$ values greater than or equal to $l'$. Let this range be $[p_1,p_2]$.
        \item Performing second binary search on $A[p_1,p_2]$ to obtain the range of $s$ values that are less than or equal to $a$ in $A[p_1,p_2]$. Let this range be $[p'_1,p'_2]$.
        \item Return the indices of trees, that is, the $r$ values stored in records $A[p'_1,p'_2]$.
    \end{enumerate}
    The binary search takes $O(\log n)$ time and the tree indices are returned in $d$ time where $d=p'_2-p'_1+1$. Thus, $\texttt{getNhb}$ takes a total time of $O(\log n + d)$.
\end{itemize}

\noindent
Algorithm~\ref{alg:neighbourhoodkleaf} gives an implementation  of the neighbourhood query.

\begin{algorithm}[ht!]
\DontPrintSemicolon
    \SetKwFunction{Fneighbourhood}{neighbourhood}
    \SetKwProg{Fn}{Function}{:}{}
    \Fn{\Fneighbourhood{$i$}}{
        $\mathcal{P}'_i \leftarrow \texttt{alpha}(i), \mathcal{P} \leftarrow \phi$\;
        Add $\texttt{pathep}(\texttt{getIndex}(i))$ to $\mathcal{P}$\;
        \ForEach{${\normalfont j \in \mathcal{P}'_i}$}{
            Add $\texttt{pathep}(j)$ to $\mathcal{P}$\;
        }
        Add $\texttt{getCPath}(\texttt{getIndex}(i),\mathcal{P}'_i[1])$ to $\mathcal{P}$ if it is not $\texttt{NULL}$\;
        \ForEach{${\normalfont 1 \le j \le \texttt{getSize}(i)-1}$}{
            Add $\texttt{getCPath}(\mathcal{P}'_i[j],\mathcal{P}'_i[j+1])$ to $\mathcal{P}$ if it is not $\texttt{NULL}$\;
        }
        $N \leftarrow \phi$\;
        Let $t$ be a bit-vector of length $n$ initialised to 0\;
        \ForEach{$(a,b) \in \mathcal{P}$}{
            $N' \leftarrow \texttt{neighbourhoodPG}(a,b)$\;
        }
        \ForEach{$j' \in N'$}{
            $p \leftarrow \pi^{-1}(j')$ \;
            \If{$t[p] \ne 1$}{
                Add $D[p]$ to $N$\;
                $t[p] \leftarrow 1$ \;
            }
        }

        $P_1^i \leftarrow \texttt{pathep}(\texttt{getIndex}(i))$\;
        $P_2^i \leftarrow \texttt{pathep}(\mathcal{P}_i[1])$\;
        
        Let $P_1^i=(a_1,b_1), P_2^i=(a_2,b_2), l \leftarrow \texttt{lca}(a_1,b_1), l' \leftarrow l$\;
        \If{${\normalfont \texttt{lmost\_child}}({\normalfont \texttt{parent}}(l))=l$}{
            $v \leftarrow {\normalfont \texttt{parent}}({\normalfont \texttt{getHPStartNode}}(l))$\;
        }
        \Else{
            $v \leftarrow {\normalfont \texttt{parent}}(l)$\;
        }
        \While{$v \ne {\normalfont \texttt{NULL}}$}{
            Add $\texttt{getNhb}(v,l',b_1)$ to $N$\;
            $l \leftarrow v$\;
            \If{${\normalfont \texttt{child}}(1,{\normalfont \texttt{parent}}(l))=l$}{
                $v \leftarrow {\normalfont \texttt{parent}}({\normalfont \texttt{getHPStartNode}}(l))$\;
            }
            \Else{
                $v \leftarrow {\normalfont \texttt{parent}}(l)$\;
            }
        }
    }  
\caption{Given index $i$ as input, the function returns the set of vertices adjacent to $u_i \in V(G)$.}    
\label{alg:neighbourhoodkleaf}
\end{algorithm} 

\begin{lemma}
\label{lem:neighbourhoodkleaf}
For $k \in o(n^c), c>0$, the class of $k-$vertex leafage chordal graphs have a $(k-1)n \log n + o(kn \log n)$ bit succinct data structure that supports neighbourhood query for vertex $v_i$ in $O(k^2 d_i \log  n + \log^2 n)$ time using additional $2n\log n$ bits where $d_i$ is the degree of $v_i$.  
\end{lemma}
\begin{proof}
Let $G$ be a $k-$vertex leafage chordal graph and the sub-tree $T_i$ of  tree $T$ of $G$ correspond to $v_i \in V(G)$. Lines 2 to 8 of Algorithm~\ref{alg:neighbourhoodkleaf} decompose $T_i$ into paths $\mathcal{P}=\{P_1^i,\ldots,P_{k_i}^i, \texttt{connector}(i,1,2), \texttt{connector}(i,2,3),\ldots, \texttt{connector}(i,k_i/2-1,k_i/2)\}$ as described in Lemma~\ref{lem:Ti}. In Lines 11 and 12 of Algorithm~\ref{alg:neighbourhoodkleaf} we obtain the neighbours  in $H$ of all paths in $\mathcal{P}$. From Lemma~\ref{lem:beta}, we know that $\sum\limits_{P \in \mathcal{P}} |\beta(P)| \le k(k-1) d_i/2$. Thus, total time taken by Lines 11 and 12 is $O(k^2 d_i \log n)$ as $\texttt{neighbourhoodPG}$ takes $O(d_{v_i} \log n)$ time. The loop at Line 13 takes $O(d_{v_i} f(n))$ time as $\pi^{-1}$ take $O(f(n))$ time as per Lemma~\ref{lem:perm}. All the sub-trees $T_j$ whose connectors intersect $P_1^i=(a_1^i,b_1^i)$ are obtained  by searching for sub-trees with lca as a node in the path connecting the root of $T_i$ to root of $T$ and with a path $P_p^j, 1 \le p \le k_j$ contained inside sub-tree rooted  at $\texttt{lca}(a_1^i,b_1^i)$; these are the only connectors due to Lemma~\ref{lem:intersectingconn}. The loop at Line 25 iterates $O(\log n)$ times till the root is reached. This is because as a consequence of Lemma~\ref{lem:lightconn}, for the connector paths we  only count the light edges of $T$, as the heavy paths are left aligned and they are part of the paths connecting the leaves. Further, as per Lemma  6 of~\cite{BCNS2024}, there are at most $O(\log n)$ light edges in any path of $T$. $\texttt{getNhb}$ takes $O(\log n +d_i)$ time and the loop at Line 21 runs $O(\log n)$ times. So total time taken by the loop is $O(\log^2 n + d_i\log n)$. Since $f(n)=o(\log n)$ as per Lemma~\ref{lem:perm}, the total time taken by Algorithm~\ref{alg:neighbourhoodkleaf} is $O(k^2 d_i \log  n + \log^2 n)$.
\end{proof}

\noindent
Thus, we have the following theorem.
\succinctds*
\begin{proof}
    From Lemma~\ref{lem:kleafsuccrep}, we know that for $k \in o(n/\log n)$, the $(k-1)n \log n + o(kn \log n)$-bit data structure is succinct. From  Lemma~\ref{lem:adjacencykleaf}, we know that the data structure supports adjacency query in time $O(k\log n)$ and from Lemma~\ref{lem:neighbourhoodkleaf}, we know that the neighbourhood query is supported in time $O(k^2 d_i \log  n + \log^2 n)$ using additional $2n\log n$ bits.
\end{proof}

\section{Conclusion}
\label{sec:conclusion}
The parameters, leafage and vertex leafage, are defined for chordal graphs which is a special case of intersection graphs. In comparison, boxicity and interval number allow us to model general graphs as intersection graphs. However, they do not give a tree model like in the case of chordal graphs. We ask the following question: ``\textit{Can leafage and vertex leafage be generalized for any graph?}" The answer is positive if we consider the nice tree-decomposition. For any graph, the nice tree-decomposition allows us to establish a correspondence between vertices of the graph and sub-trees of the tree obtained. As one can see clearly, the parameters leafage and vertex leafage  become applicable to general graphs now. For chordal graphs we know that the clique tree has nodes that correspond to the maximal cliques of the graph. However, we lose such nice properties in the case of nice tree-decomposition. Despite these limitations we think it is worthwhile to generalize these parameters and design a succinct data structure for the more general class thus formed. It is also interesting to note that a unit increase in the leafage parameter increases the number of graphs in the class by $n \log n$ as compared to $2n\log n$ in the case of boxicity or interval number. From Balakrishnan et al.~\cite{GSNK2023Arxiv} we know that for the succinct data structure designed for graphs with bounded boxicity $d>0$ using the succinct data structure for interval graphs an efficient but easy implementation for neighbourhood query is not possible. For bounded leafage parameter where the intersection model is a tree, it will be interesting to see if there exists a simple and efficient implementation of the neighbourhood query.

\bibliography{refs}
\end{document}